\newcommand{\stable}{{\mathit{stb}}}
\newcommand{\stb}{\stable}
\newcommand{\adm}{\mathit{ad}}
\newcommand{\prf}{\mathit{pr}}
\newcommand{\grd}{\mathit{gr}}
\newcommand{\com}{\mathit{co}}
\newcommand{\myparNS}[1]{\noindent{\bfseries #1.}}
\newcommand{\mypar}[1]{\smallskip\myparNS{#1}}
\newtheorem{theorem}{Theorem}
\newtheorem{definition}{Definition}
\title{%Title options: Layered AF Visualization\\
Visualizing Extensions of Argumentation Frameworks as Layered Graphs}
\author{%
    Author name
    \affiliations
    Affiliation
    \emails
    email@example.com    % email
}
\author{%
Martin N\"ollenburg$^1$\and
Christian Pirker$^1$\and
Anna Rapberger$^2$\and
Stefan Woltran$^1$\and
Jules Wulms$^3$\\
\affiliations
$^1$TU Wien\quad
$^2$Imperial College London\quad
$^3$TU Eindhoven\\
\emails
noellenburg@ac.tuwien.ac.at, e11908083@student.tuwien.ac.at, a.rapberger@imperial.ac.uk, woltran@dbai.tuwien.ac.at, j.j.h.m.wulms@tue.nl
}
\begin{document}

\maketitle

\begin{abstract}
The visualization of argumentation frameworks (AFs) is crucial for enabling a wide applicability of argumentative tools. However, their visualization is often considered only as an accompanying part of tools for computing semantics
and standard graphical representations are used. 
We introduce
a new visualization technique that draws an AF, together with an extension (as part of the input),
as a 3-layer graph layout. Our technique supports the user
to more easily explore the visualized AF, better understand extensions, and verify algorithms for
computing semantics.
To optimize the visual clarity and aesthetics of this layout, we propose to minimize edge crossings in our
3-layer drawing.  We do so by an exact ILP-based approach, but also
propose a fast heuristic pipeline. Via a quantitative evaluation, we show that the heuristic is feasible even for large instances, while producing at most twice as many crossings as an optimal drawing in most cases.  
\end{abstract}

\section{Introduction}
Computational models of argumentation provide methods for reasoning under uncertainty; they are among the core formalisms of knowledge representation and reasoning.
\emph{Abstract argumentation}~\cite{Dung95} is one of the key approaches to argumentative reasoning, with a broad range of applications in legal reasoning, medical sciences, and e-democracy, see, e.g., ~\cite{AtkinsonBGHPRST17,arguHandbook}. 
In Dung's argumentation frameworks (AFs), arguments are considered abstract; reasoning is entirely based on the conflicts between arguments. 
Argumentation semantics provide criteria for argument acceptance; 
they can be employed to determine sets of jointly acceptable arguments (extensions). 
The visualization of AFs  is crucial for the deployment of formal argumentation.
AFs are transparent and explainable by design; their graph-based structure allows for an intuitive and easy-to-understand visual representation of conflicting scenarios. For instance,
\citeauthor{BesnardDDL22} 
\shortcite{BesnardDDL22} 
utilize the visualization of subgraphs of AFs for argumentative explanations. 
However, the design of powerful and flexible visualization tools is often considered secondary, in particular when it comes to depict the result
in terms of extensions.
%AF visualization is usually only an accompanying part of tools for computing semantics.
To unleash their full potential, it is essential to provide customized visualizations of AFs rather than applying general-purpose graph drawing techniques.

In this work, we develop genuine methods for the graphical representation of AFs. We introduce a new visualization technique that draws an AF, together with an extension (as part of the input), as a 3-layer graph layout. Our technique supports the user to more easily explore the visualized argumentation framework, better understand extensions, and graphically verify algorithms for computing semantics. 
%Unlike other tools, our goal is to innovate on the visualization of AFs and semantics, independent of the semantics computation. 

Our 3-layer approach allows us to highlight multiple properties of the provided extension: The layers make it easy to identify that the extension is conflict-free and admissible. Using different colors we further highlight various witness sets for admissibility and to show whether an extension is maximal. To optimize the visual clarity and aesthetics of this visualization, we propose to minimize edge crossings in our 3-layer drawing. We show that minimizing crossings under additional constraints imposed by the highlighting cannot be achieved by standard crossing minimization approaches for layered graphs. To optimize our visualizations, we therefore propose a fast heuristic pipeline and an exact %ILP-based 
integer linear programming approach, the latter being feasible only for small instances. We quantitatively evaluate the run times and number of crossings via a prototype implementation.\footnote{Demo prototype at \url{https://christianlinuspk.github.io/ArguViz/}; code available at \url{ https://github.com/ChristianLinusPK/ArguViz/}\label{note:demo}}

\begin{figure*}
    \centering
    \includegraphics[width=.3\textwidth]{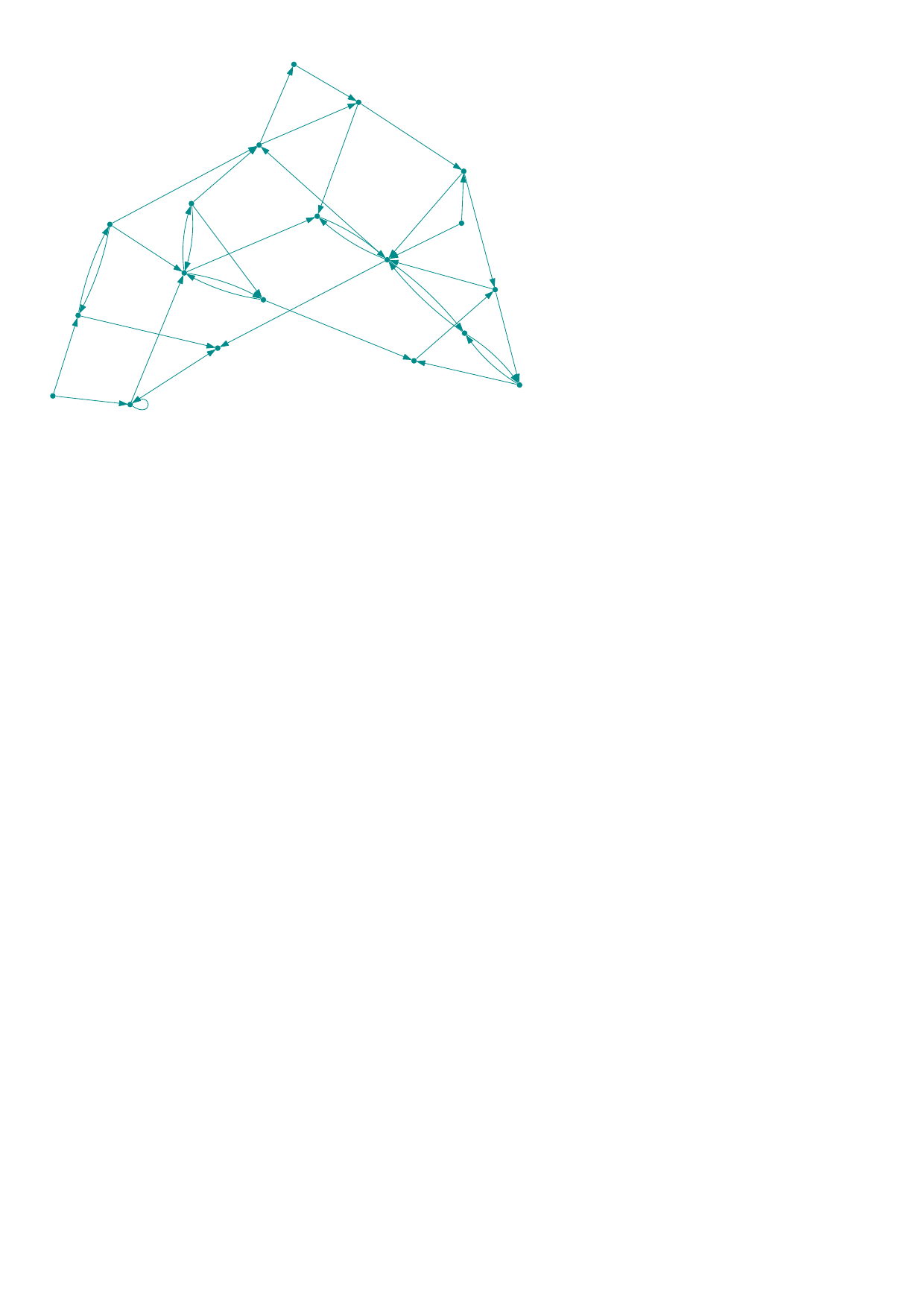}
    \hfill
    \includegraphics[width=.3\textwidth]{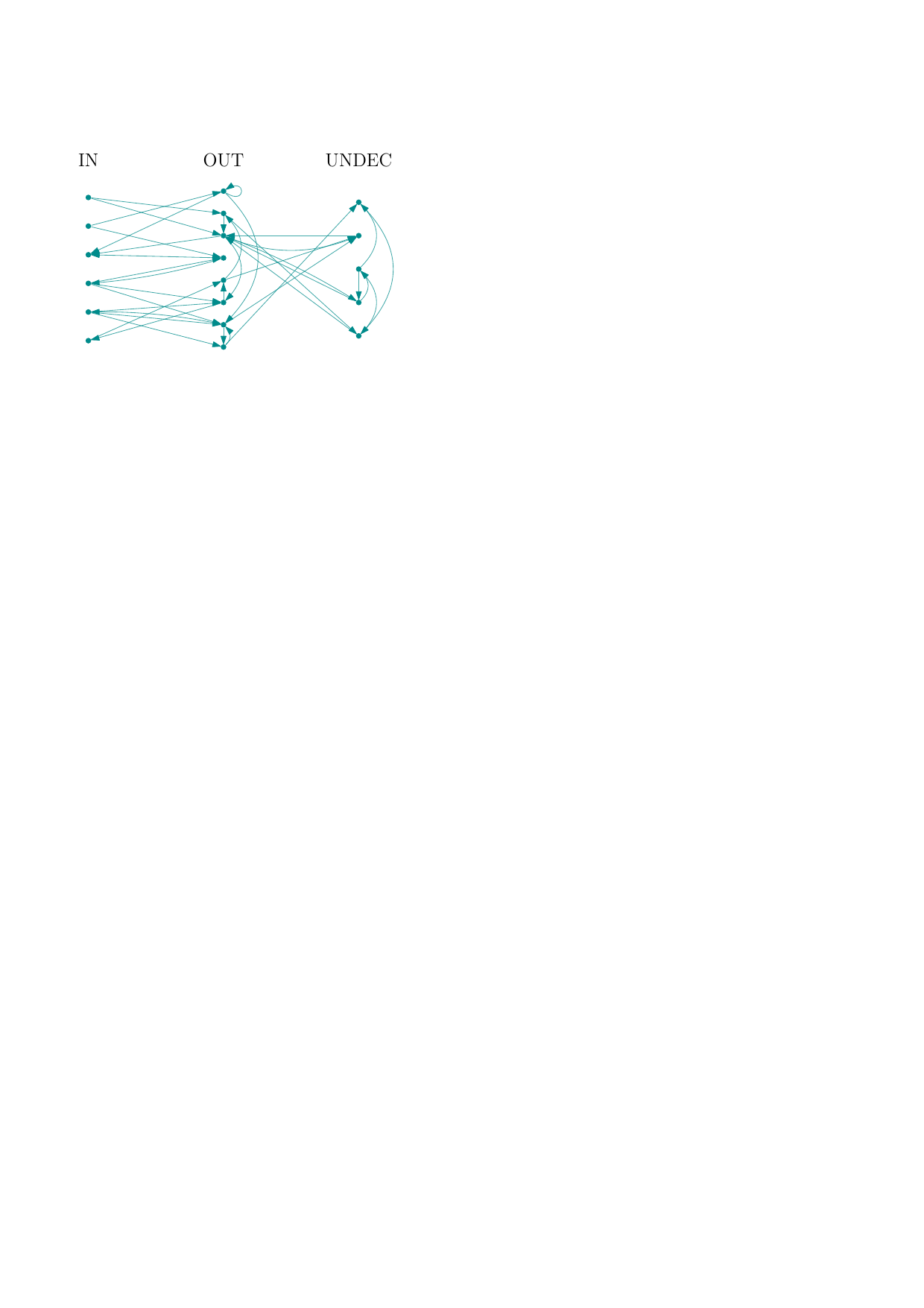}
    \hfill
    \includegraphics[width=.3\textwidth]{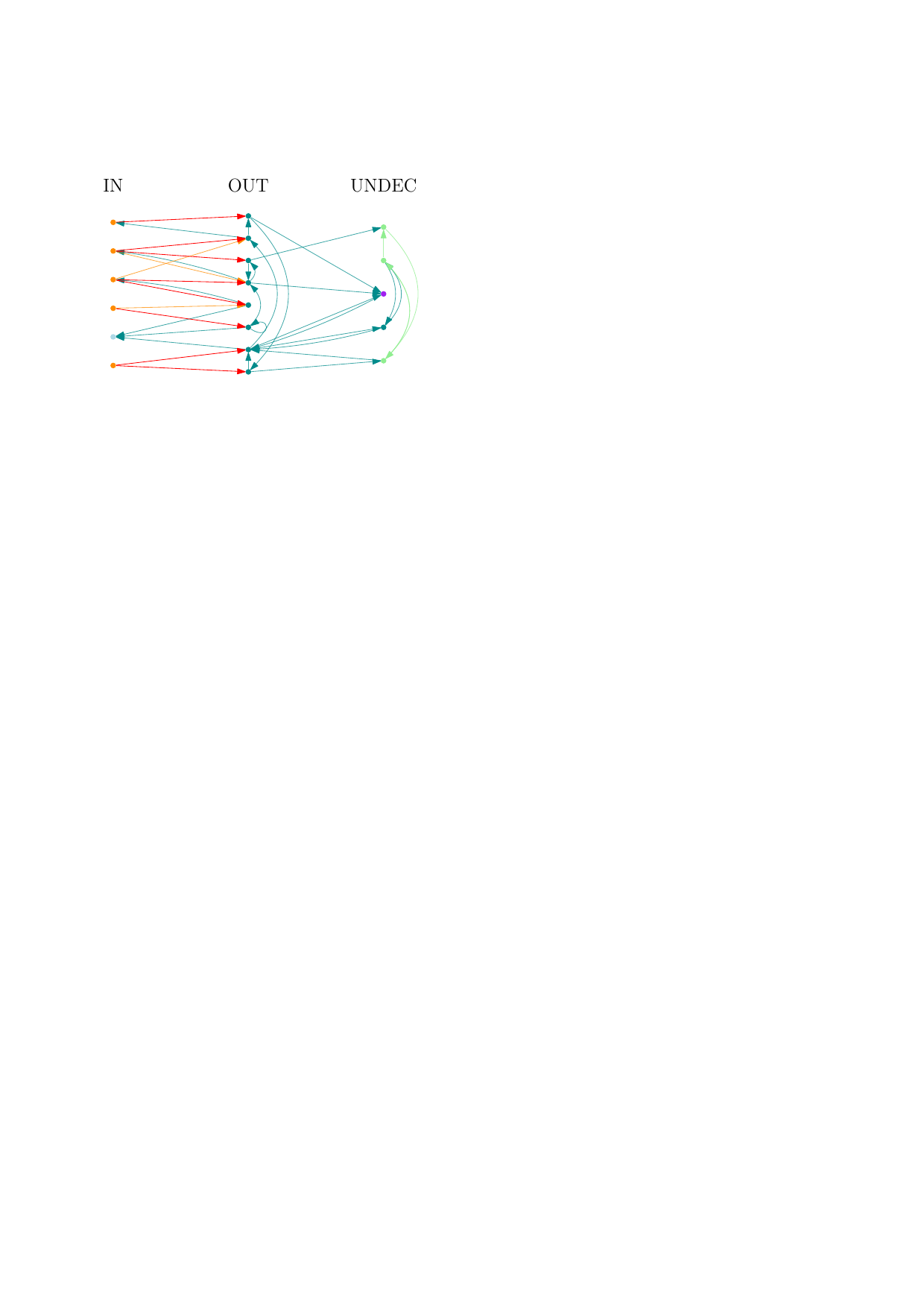}
    \caption{The same AF visualized using a force-directed layout (left), and layered AF drawings; basic (middle) and optimized (right).}
    \label{fig:design-example}
\end{figure*}

\paragraph{Related Work}
The efficient computation of extensions has received significant attention in the literature; see e.g., \cite{NiskanenJ20a,DvorakGRWW20,DVORAK201453,bistarelli2011conarg},
also witnessed by the biennial International Competition on Computational Models of Argumentation (ICCMA)
\cite{DBLP:conf/kr/JarvisaloLN23}.
%\cite{ThimmV17,GagglLMW20,LagniezLMR20,BistarelliKST21,DBLP:conf/kr/JarvisaloLN23}.
%1) paragraph on AFs and computing extensions (Anna)
The representation of argument networks, however, is often only an accompanying function of extension retrieval.
%ee \url{https://cmna.csc.liv.ac.uk/CMNA7/papers/Schneider.pdf}
%
%Example mentioned by Stefan: 
%
%\citeauthor{BesnardDDL22} (\citeyear{BesnardDDL22}) utilize the visualization of subgraphs as explanations for argument acceptance.
%
Basic argument visualization strategies independent of the computation of the extensions, find applications in the OVA+ tool for analyzing arguments and debates~\cite{reed2014ova,lawrence2012aifdb} or in the Argunet system, designed for collaborative argumentation analysis %by Gregor Betz et al~
\cite{Argunetarticle}.
Moreover, the visualization tool NEXAS~\cite{DGKMRY2022} provides means to navigating and exploring argumentation solution spaces, but employs other visualization methods like scatterplots and matrix representations.

%TODO: check Christian's thesis and the FWF proposal 

% \newpage
Drawing directed graphs on two or more layers is a well-known drawing 
paradigm in graph drawing~\cite{stt-mvuhss-81,hn-hda-13}. For a fixed assignment of vertices to layers (which can be computed or given as part of the input) the main task studied for optimizing the readability of the resulting drawing is to minimize the number of edge crossings~\cite{p-wagehu-97}. In so-called proper layered drawings, edges connect vertices of adjacent layers only and the number of crossings depends only on the permutations of vertices of adjacent layers. This crossing minimization problem is NP-hard, even for a 2-layer drawing, in which the permutation of one layers is fixed~\cite{EadesW94}. Several heuristics, approximations, and exact integer programming (ILP) methods for two or more layers are known~\cite{stt-mvuhss-81,ew-ecdbg-94,DBLP:conf/gd/JungerLMO97} and their performance has been experimentally evaluated~\cite{jm-2scmpeha-97}. While some variants of layered crossing minimization in the presence of ordering constraints have been studied~\cite{Forster04}, no previous results are known for the specific constraints needed when drawing AFs with extensions and highlighting their structural properties as done in this paper.

%3) graph drawing algorithms for layered graphs and stack layouts

%classic Sugiyama framework~\cite{stt-mvuhss-81}

%overview chapter on hierarchical layout~\cite{hn-hda-13}

%hardness of two-layer crossing minimization~\cite{EadesW94}

%multi-layer crossing minimization~\cite{DBLP:conf/gd/JungerLMO97} (Our ILP is adapted from this)

%heuristics~\cite{ew-ecdbg-94} and exact methods~\cite{jm-2scmpeha-97}

%constrained version~\cite{Forster04}

\section{Preliminaries}
We fix a non-finite background set ${U}$. An argumentation framework (AF) \cite{Dung95} is a directed graph $F = (A,R)$ where $A\subseteq{U}$ represents a set of arguments and $R\subseteq A\times A$ models \textit{attacks} between them. 
We let $E^+_F = \{ x\in A \mid E\text{ attacks }x \}$ for a set $E\subseteq A$.
%In this paper we consider finite AFs only. %and we use $\m{F}$ for the set of all these graphs. 
%For a given $\F = (B,S)$ we let $A(\F) = B$ and $R(F) = S$.
%For $U\subseteq A$ we define the restriction of $\F$ to $U$ as usual, i.e.\ $F\!\!\downarrow_U = (A\cap U , R\cap ({U\times U}))$.
For two arguments $x,y\in A$, if $(x,y)\in R$ we say that $x$ \textit{attacks} $y$ as well as $x$ \textit{attacks} (the set) $E$ given that $y\in E\subseteq A$.
%We frequently use the so-called \textit{range} of a set $E$ defined as $E^\oplus_F = E\cup E^+_F$ where $E^+_F = \{ a\in A \mid E\text{ attacks }a \}$.
%The \textit{$E$-reduct of $\F$} is the AF $F^E = (E^*, R \cap (E^* \times E^*))$ where $E^* = A \setminus E^\oplus$.
%This means, $\F^E$ is the subframework of $\F$ obtained by removing the range of $E$.%, \ie $F^E = F\!\!\downarrow_{A\setminus E^\oplus_F}$. 
%
%Given AF $G=(B,S)$, the union $F\cup G$ is defined as  $\left(A \cup B, R\cup S\right)$; we say $G$ \emph{expands} $F$. Likewise, $F\setminus G=\left(A \setminus B, (R\setminus S)\cap (A\setminus B)^2\right)$; we write $F\setminus B$ if $S=\emptyset$.

\newcommand{\F}{F}

A set $E\subseteq A$ is \emph{conflict-free} in $\F$ %(for short, $E\in\cf(\F)$) 
iff for no $x,y\in E$, $(x,y)\in R$. 
%We say 
$E$ \textit{defends} an argument $x$ if 
$E$ attacks each attacker of $x$.
%any attacker of $x$ is attacked by some argument of $E$.
A conflict-free set $E$ is \emph{admissible} in $\F$ ($E\in \adm(\F)$) iff it defends all its elements.
%(we also say, $E$ defends itself).
A \emph{semantics} is a function $\sigma:\mathcal{F}\to 2^{2^{U}}$ with $F\mapsto\sigma(F)\subseteq 2^A$.
This means, given an AF $\F = (A,R)$ a semantics returns a set of subsets of $A$. These subsets are called $\sigma$-\emph{extensions}.
In this paper we consider so-called 
%\textit{naive},
%\textit{admissible}, 
\textit{complete}, 
\textit{grounded},
\textit{preferred}, 
and \textit{stable}
%\textit{semi-stable}
%and \textit{stage}
semantics (abbr. 
%$\nav$, 
%$\adm$,
$\com$, 
$\grd$, 
$\prf$,  
$\stb$).
\begin{definition} \label{def:extsem}
	Let $\F = (A,R)$ be an AF and $E\in\adm(\F)$. %
	%\begin{enumerate}
	%\begin{itemize}
	%	\item 
  $E\in\com(\F)$ iff $E$ contains all arguments it defends;
		%\item 
  $E\in\grd(\F)$ iff $E$ is $\subseteq$-minimal in $\com(\F)$;
	%	\item 
 $E\in\prf(\F)$ iff $E$ is $\subseteq$-maximal in $\com(\F)$;
	%	\item 
 $E\in\stb(\F)$ iff $E^+=A\setminus E$.	
	%\end{itemize}
%		$E\in \semi(\F)$ iff $E\in \com(\F)$ and there is no $D\in\com(\F)$ with $E\cup E^+_F\subset D\cup D^+_F$.
	%\end{enumerate}
\end{definition}
We make use of argument labelings to visualize extensions~\cite{BaroniCG18}. 
Let $E$ be a $\sigma$-extension. 
An argument $a\in A$ is labelled \textsc{in} if $a\in E$; \textsc{out} if there is $b\in E$ s.t.\ $(b,a)\in R$; and  \textsc{undec} otherwise.

A \emph{proper $k$-layer drawing} of $F=(A,R)$ consists of (i) a partition of $A$ into $k$ subsets $A_1, \dots, A_k$ called \emph{layers} such that for each edge $(x,y) \in R$ either $x,y$ belong to the same layer $A_i$ (\emph{intra-layer edge}) or they belong to adjacent layers $A_i, A_{i+1}$ (\emph{proper edge}), (ii) a permutation $\pi_i$ for each layer $A_i$ mapping each argument $a \in A_i$ to a position $(i,\pi_i(a))$ on a set of vertical lines in the plane,  (iii) a straight-line segment connecting the endpoints of each proper edge, and (iv) an arc to the right of each layer connecting the endpoints of each intra-layer edge. We consider the cases of $k=2$ and $k=3$, see Fig.~\ref{fig:design-example} (middle) for a 3-layer example.

% TODO (Martin): put here basic preliminaries on layered graph drawings: map vertices to k vertical layers (here k=2 or 3), (proper) edges link between adjacent layers, (level) edges link within a layer; the former are straight, the latter are semi-arcs; the drawing objective is to determine vertex orders within each layer, which already combinatorially determines all crossings (inversions of edges are crossings)

\section{Visualization Design} 
In this section we elaborate on the design of our visualization. We start by introducing and motivating the design decisions that underlie our layered AF drawings. Then we elaborate on how we optimize visual clarity and aesthetics. Finally, we prove a few properties of our drawings.

\mypar{Design decisions} An important distinction between our AF visualization and existing works is that it is not solely a graphical interface for a solver that computes semantics. To the contrary, our technique for an AF $F=(A,R)$ requires that the semantics have already been computed, and hence an extension~$E\subseteq A$ is also provided as input. We visualize both an AF~$F$, as well as an extension~$E$. %to allow the user to explore the argumentation framework, build understanding of the framework and the extension, and verifying properties of the extension.
For that, we propose the following basic design decisions (DDs).

\begin{enumerate}
    \item\label{dd:graph} Visualize the AF~$F=(A,R)$ as a layered graph drawing.
    \item\label{dd:vertices} All arguments~$A$ are represented by vertices that are partitioned over (at most) three layers according to their argument labelings: the \textsc{in}-, \textsc{out}, and \textsc{undec}-layers.
    \item\label{dd:edges} The attacks~$R$ between arguments are represented as directed edges following proper 3-layer drawing standards.
\end{enumerate}

See Fig.~\ref{fig:design-example} (middle) for an example of a visualization adhering to these basic design decisions. DDs~\ref{dd:graph} and~\ref{dd:edges} are fairly standard; AFs are often visualized as graphs, in which arguments and attacks are represented as vertices and (directed) edges, respectively. However, DD~\ref{dd:vertices} ensures that our visualization is distinctively different from existing work, by incorporating the given extension~$E$ into the layout of the graph. Compared to standard techniques (see Fig.~\ref{fig:design-example} (left)), this emphasizes the arguments in the extension (the \textsc{in}-layer) and structures the remaining arguments with respect to the extension (\textsc{out}- and \textsc{undec}-layers). Note that we can easily verify that the set~$E$ is \emph{conflict-free}, by the absence of edges between vertices in the \textsc{in}-layer. 
%Moreover, DD~\ref{dd:edges} can be satisfied since there are no attacks between arguments in the \textsc{in} and \textsc{undec}-layers. 
Building on this basic design, we introduce further design decisions for attacks.

\begin{enumerate}
    \setcounter{enumi}{3}
    \item\label{dd:in-attacks} Attacks from \textsc{in} to \textsc{out} have a distinct color.
    \item\label{dd:red} For each argument in \textsc{out}, highlight one incoming edge.
    \item\label{dd:odd-cycle} Odd cycles in \textsc{undec} are highlighted.
\end{enumerate}

For an example of these design decisions, see Fig.~\ref{fig:design-example} (right): We chose orange as the color to distinguish attacks from \textsc{in} to \textsc{out}; we highlight one edge for each argument in \textsc{out} in red and odd cycles in \textsc{undec} using (light) green. 

DD~\ref{dd:in-attacks} and~\ref{dd:red} facilitate the visual communication of whether the extension is \emph{admissible}. %Admissible extensions have the property that every argument that attacks an argument in \textsc{in}, is also attacked by an argument in \textsc{in}. 
To see this, note that, by definition, edges starting in the \textsc{in}-layer point to arguments in \textsc{out}, and edges between \textsc{in} and \textsc{undec} cannot exist for admissible extensions. Hence coloring and highlighting to verify admissibility can be restricted to edges between \textsc{in} and \textsc{out}. Specifically, DD~\ref{dd:red} is implemented by coloring certain edges red, which ensures a strong visual presence of a witness for admissibility. Moreover, DD~\ref{dd:in-attacks} allows the user to further explore and understand the admissibility property. For example, the user can look for other witnesses among the (orange) colored edges, or find out that admissibility hinges on a single attack, the deletion of which would no longer make this extension admissible.

Finally, DD~\ref{dd:odd-cycle} focuses on the \textsc{undec}-layer. Arguments in \textsc{undec} can potentially be added to~$E$. This is especially interesting when arguments in \textsc{out} attack arguments in \textsc{undec}: the attacked arguments in \textsc{undec} are then already defended, and need to be added to~$E$ to obtain a complete extension. 
%However, consider a directed cycle in the AF, and observe that for an admissible set, no two consecutive arguments in a directed cycle can be part of~$E$. 
However, arguments in an isolated cycle of odd length cannot be part of a complete extension, by definition. DD~\ref{dd:odd-cycle} produces visual witnesses for arguments that cannot be added to a complete extension~$E$, unless attacks outside the odd cycle interfere. Thus, this can trigger further investigation into whether~$E$ is or can become a complete extension.

We round out the set of design decisions with a few design decisions for arguments.

\begin{enumerate}
    \setcounter{enumi}{6}
    \item\label{dd:attacking-arguments} Attacking arguments are colored according to their attack.
    \item\label{dd:attacked-in} Arguments in \textsc{in} that do not attack are highlighted.
    \item\label{dd:attacked-undec} Arguments in \textsc{undec} that are not attacked by other arguments in \textsc{undec} are highlighted --- DD~\ref{dd:attacking-arguments} takes priority.
\end{enumerate}

Again see Fig.~\ref{fig:design-example} (right): Attacking arguments in \textsc{in} match the (orange) color of the edges (even when highlighted in red) and arguments in \textsc{undec} match the (light) green color of their odd cycle edges. Arguments in \textsc{in} that do not attack are highlighted in (light) blue, while arguments in \textsc{undec} that do not attack are highlighted in purple.

DD~\ref{dd:attacking-arguments} ensures a coherent look and makes it easier to trace attacks back to the attacking argument. DD~\ref{dd:attacked-in} again helps in verifying admissibility of the extension: arguments in \textsc{in} that do not attack are not defending any arguments and hence must be defended by another argument.

Similar to DD~\ref{dd:odd-cycle} for the \textsc{in}-layer, DD~\ref{dd:attacked-undec} focuses on the \textsc{undec}-layer. DD~\ref{dd:attacked-undec} ensures that arguments that are already defended are highlighted. For admissible extensions, these attacks may be added to \textsc{in}.

\mypar{Optimization} Given these design decisions, we want to produce visualizations that are as clear and legible as possible. Consider Fig.~\ref{fig:design-example} once more, and observe that on the right, vertex orders are optimized to minimize edge crossings. Minimizing edge crossings is a well-known %, but generally NP-hard~\cite{gj-cnn-83} 
optimization criterion for graph drawings that ensures visual clarity and aesthetically pleasing visualizations~\cite{p-wagehu-97}.

We add non-standard constraints to our crossing minimization. Remember that the order of the vertices in the layers determines the crossings, and that we have intra-layer and proper edges. We argue that the crossings of different edges do not affect our visualization equally: the proper edges between the \textsc{in}- and \textsc{out}-layer are crucial for the verification and exploration of extension properties. We therefore place more emphasis on minimizing their crossings. 
\begin{enumerate}[label=\Roman*.]
    \item We introduce the \emph{red-edge constraint} (REC): No two highlighted edges for DD~\ref{dd:red} may cross.
    \item We  minimize (1) the number of edge crossings between the \textsc{in}- and \textsc{out}-layer as primary objective, and (2) all other edge crossings as secondary objective.
\end{enumerate}
Let \textsc{AFCrossingMinimization} be the optimization problem that asks to compute a layered AF drawing that minimizes crossings according to II. We consider variants of this problem under the REC (\textsc{AFCrossMinREC}) and disregarding the REC (\textsc{AFCrossMin}).

\subsection{Properties of Optimal Layered AF Drawings}
\label{sec:properties}

In this section we investigate the relation between \textsc{AFCrossMin} and \textsc{AFCrossMinREC}. We consider very simple AFs that have only \textsc{in}- and \textsc{out}-layers, only proper edges between \textsc{in} and \textsc{out}, and small extensions of size two or three. We first show that solving \textsc{AFCrossMinREC} and \textsc{AFCrossMin} on certain instances results in equal crossing counts (see Fig.~\ref{fig:theorem1}).

\begin{theorem}\label{thm:afcrossmin-equal}
    Layered AF drawings of an AF $F=(A,R)$ with stable extension~$E$, with $|E|=2$, obtained by solving \textsc{AFCrossMinREC} and by solving \textsc{AFCrossMin} have the same number of total edge crossings.
\end{theorem}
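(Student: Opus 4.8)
The plan is to reduce the theorem to a concrete combinatorial analysis of two-layer drawings with a size-two \textsc{in}-layer. Write $E=\{a,b\}$; since $E$ is stable, every argument outside $E$ is attacked by $a$ or $b$, so the \textsc{out}-layer vertices split into three types according to their \textsc{in}-neighborhood: type $A$ (attacked only by $a$), type $B$ (only by $b$), and type $AB$ (by both). In the simplified setting of this section all edges are proper \textsc{in}--\textsc{out} edges, so the total crossing count equals the \textsc{in}--\textsc{out} crossing count and depends only on the two permutations. Because the \textsc{in}-layer has just two vertices, there are only two \textsc{in}-orders, and reflecting a drawing left--right swaps them while preserving all crossings (including crossings between highlighted edges); hence I may assume the order $a$ before $b$ throughout.

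First I would write the crossing count as a weighted inversion count over the \textsc{out}-order: with $a$ left of $b$, an edge into $a$ and an edge into $b$ cross exactly when the endpoint on $a$ lies to the right of the endpoint on $b$. Encoding each \textsc{out}-vertex $v$ by indicators $\alpha(v),\beta(v)\in\{0,1\}$ for adjacency to $a,b$, the cost contributed by placing $u$ left of $w$ is $\alpha(w)\beta(u)$. I would then check that these pairwise costs are consistent (type $A$ always prefers to be left, type $B$ to be right, and each $AB$ vertex lies between them, while $AB$--$AB$ pairs cost exactly one crossing in either order). Consequently the order $A \mid AB \mid B$ simultaneously realizes every pairwise minimum and is therefore globally optimal for \textsc{AFCrossMin}; crucially, any internal reordering of the $AB$-block leaves the crossing count unchanged.

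Next I would translate the REC into an ordering condition. Each \textsc{out}-vertex highlights one incoming edge; this is forced to $a$ for type $A$ and to $b$ for type $B$, but is a free choice for type $AB$. Two highlighted edges into the same \textsc{in}-vertex share an endpoint and never cross, whereas a red edge into $a$ crosses a red edge into $b$ precisely when its \textsc{out}-endpoint is to the right; thus the REC holds iff all vertices whose red edge enters $a$ precede all vertices whose red edge enters $b$. Starting from the optimal order $A \mid AB \mid B$, I would direct the red edges of the $AB$-block and reorder that block as $[AB\text{-to-}a]\,[AB\text{-to-}b]$, which leaves the crossing count unchanged by the previous step; then every red edge into $a$ (from type $A$ and the left sub-block) lies left of every red edge into $b$ (from the right sub-block and type $B$), so no two red edges cross and the REC is satisfied.

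This produces a REC-feasible drawing whose crossing count equals the \textsc{AFCrossMin} optimum, so the optimum of \textsc{AFCrossMinREC} is at most that of \textsc{AFCrossMin}; the reverse inequality is immediate since the REC only adds a constraint, whence the two optima coincide. The main obstacle is the optimality argument for the $A\mid AB\mid B$ order: I must verify that the pairwise preferences are genuinely conflict-free (in particular handling the constant-cost $AB$--$AB$ pairs) so that sorting attains the global minimum, and that reshuffling the $AB$-block to accommodate the red-edge choice provably creates no new crossings with the type-$A$ and type-$B$ vertices flanking it.
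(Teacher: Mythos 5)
Your proposal is correct and takes essentially the same approach as the paper: your types $A$, $B$, $AB$ correspond exactly to the paper's degree-1 neighbors of each \textsc{in}-argument and the $k$ degree-2 arguments, you construct the same optimal order (degree-1 block, then the shared block, then the other degree-1 block), and you exploit the same key fact that reordering the shared block is crossing-neutral, so it can be sorted by red-edge origin to satisfy the REC. Your inversion-count formulation and explicit two-inequality conclusion merely make the paper's optimality claim (that any ordering of the $k$ shared arguments yields $k(k-1)/2$ crossings, which is also a lower bound) more rigorous, not different.
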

\begin{proof}
Consider an arbitrary AF $F=(A,R)$ with a stable extension~$E$ of size two. Let $u,v\in E\subseteq A$ be the arguments in the extension, and observe that the theorem holds when $A\setminus E = \emptyset$. Thus assume that $A$ contains other arguments.

Without loss of generality we assume that $u$ is drawn above $v$ in the \textsc{in}-layer of the layered AF drawing of~$F$. Since $E$ is a stable extension, all arguments in $A\setminus E$ will be drawn in the \textsc{out}-layer. We first observe that, drawing all degree-1 arguments adjacent to~$u$ at the top of the \textsc{out}-layer can prevent them from incurring any crossings, and symmetrically, all degree-1 neighbors of~$v$ can be drawn at the bottom of the \textsc{out}-layer. This leaves only the arguments that are adjacent to both~$u$ and~$v$, and let there be $k$ of such argument. These are (necessarily) drawn in the middle of the \textsc{out}-layer, between the degree-1 arguments (see Figure~\ref{fig:theorem1}).

\begin{figure}[t]
    \centering
    \includegraphics[width=.19\textwidth]{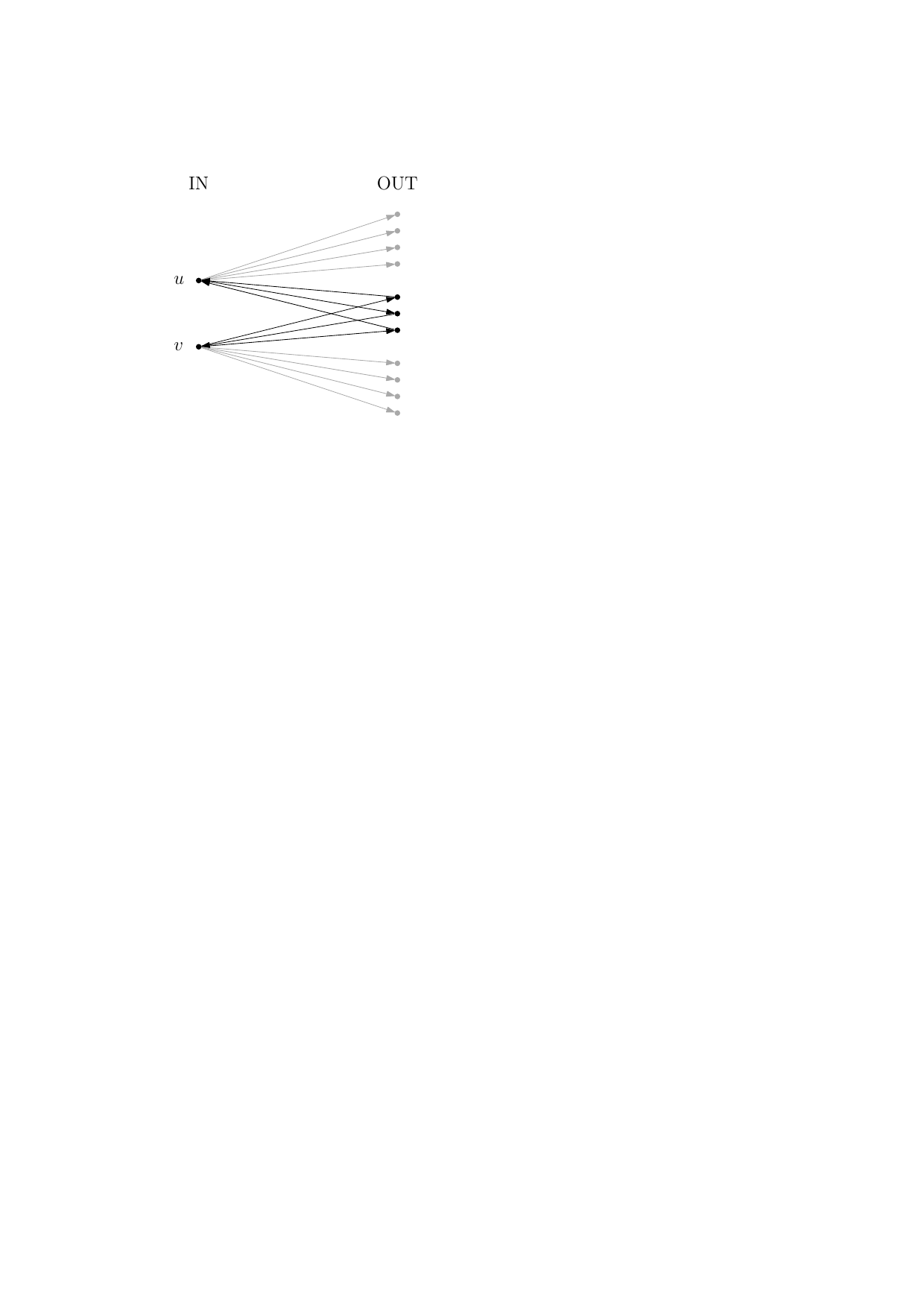}
    \qquad
    \includegraphics[width=.19\textwidth]{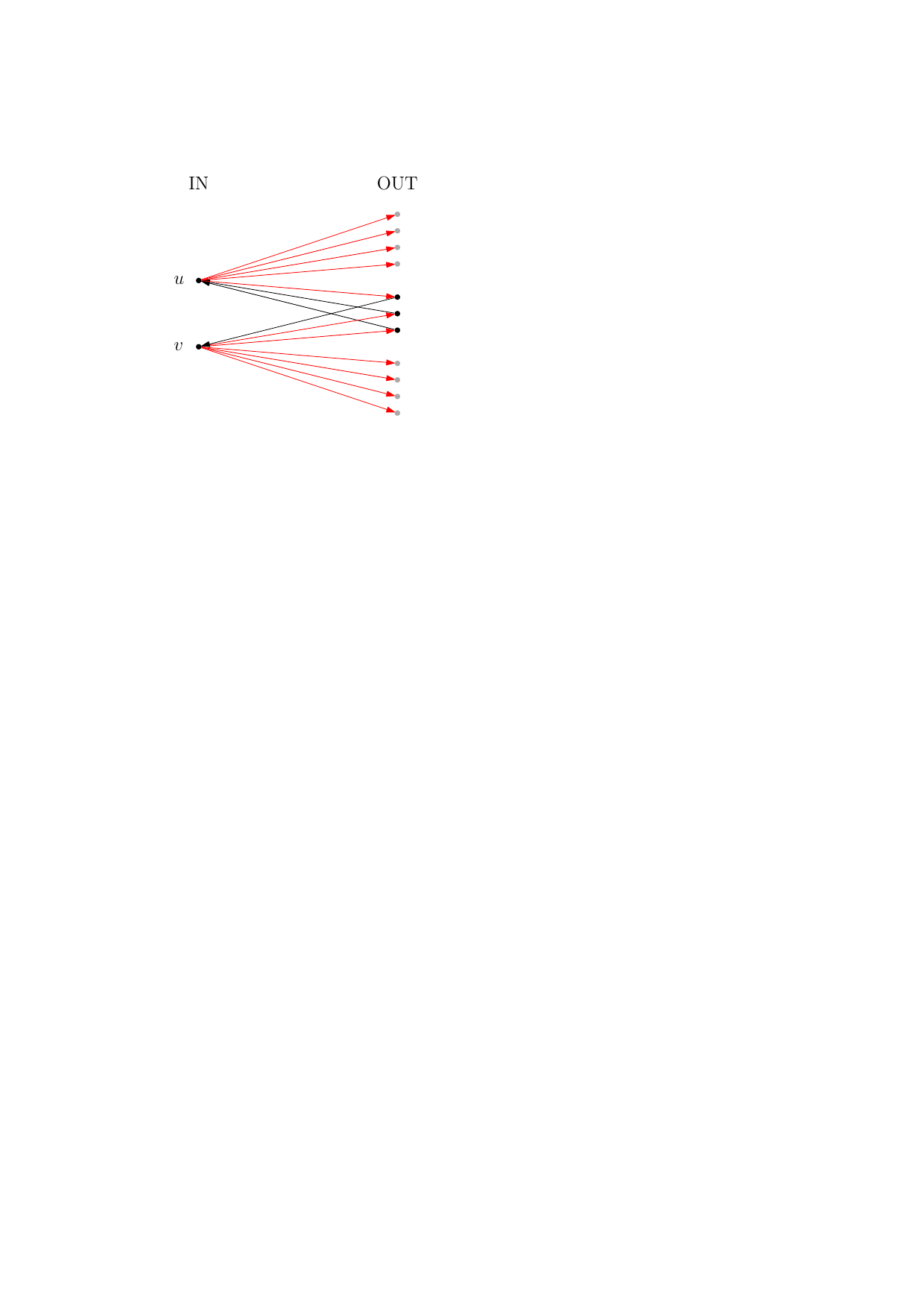}
    \caption{When $|E|=2$, solving \textsc{AFCrossMin} (left) and \textsc{AFCrossMinREC} (right) results in equal crossing counts.}
    \label{fig:theorem1}
\end{figure}

Each of these $k$ arguments will incur $k-1$ edge crossings regardless of their ordering: Take an arbitrary argument $a\in A$ in this set of $k$ arguments and let the $k$ arguments be drawn in arbitrary order. The edge $(u,a)$ (or $(a,u)$) will cross all edges between $v$ and the degree-2 arguments above~$a$. Symmetrically, $(v,a)$ (or $(a,v)$) crosses all edges between $u$ and the degree-2 arguments below~$a$. This leaves us with $\frac{k\cdot (k-1)}{2}$ crossings in total for \textsc{AFCrossMin} (because we double count crossing when counting $k-1$ crossings for each of the $k$ arguments).

To complete the proof, observe that the same approach can minimize crossings for \textsc{AFCrossMinREC}: Place degree-2 vertices between the degree-1 arguments to incur $\frac{k\cdot (k-1)}{2}$ total edge crossings. Since any ordering of the degree-2 arguments results in the same number of crossings, we can sort the degree-2 arguments depending on whether the incident red edges originates from~$u$ or~$v$. We place arguments with red edges coming from~$u$ above those with red edges from~$v$ to satisfy the REC (see Figure~\ref{fig:theorem1}).
\end{proof}

Theorem~\ref{thm:afcrossmin-equal} may seem to suggest that standard methods for crossing minimization can be used to solve \textsc{AFCrossMinREC}. However, we can also prove that for some instances the optimal solutions to \textsc{AFCrossMinREC} and \textsc{AFCrossMin} do not coincide (see Fig.~\ref{fig:theorem2}). We therefore need new methods to solve \textsc{AFCrossMinREC}.% and we introduce two such methods in the next section.

\begin{theorem}
    There exists an AF $F=(A,R)$ with extension~$E$, with $|E|=3$, for which solving \textsc{AFCrossMinREC} results in more edge crossings than solving \textsc{AFCrossMin}.
\end{theorem}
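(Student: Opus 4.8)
The plan is to exhibit a single explicit instance together with a fixed highlighting and compare the two optima directly, rather than to argue abstractly. I would take $E=\{u,v,w\}$ as the extension and add three \textsc{out}-arguments $a,b,c$ attacked cyclically: $a$ by $u$ and $v$, $b$ by $v$ and $w$, and $c$ by $w$ and $u$. As the highlighted incoming edge (DD 5) I would fix the ``trailing'' attacker of each pair, i.e.\ the edges $(v,a)$, $(w,b)$, $(u,c)$. This AF has only \textsc{in}- and \textsc{out}-layers and only proper edges; $E$ is conflict-free, it defends its elements since they have no attackers, and $E^+=\{a,b,c\}=A\setminus E$, so $E$ is in fact stable. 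Crucially, the whole instance, highlighting included, is invariant under the rotation $u\mapsto v\mapsto w\mapsto u$, $a\mapsto b\mapsto c\mapsto a$, and this symmetry is the engine of the argument.

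Next I would reduce the crossing count to a pairwise computation. For a fixed order of the \textsc{in}-layer, two straight proper edges cross exactly when their endpoints appear in opposite relative order on the two layers, so placing \textsc{out}-vertex $p$ above $q$ contributes the number of attacker pairs $(s,t)$, with $s$ attacking $p$ and $t$ attacking $q$, for which $s$ lies below $t$ in \textsc{in}. Evaluating these pairwise costs for the reference order $(u,v,w)$, one verifies that the unconstrained optimum is attained by the order $a,c,b$ with $2$ crossings. The REC, by contrast, forbids the three highlighted edges from crossing; since their sources $u,v,w$ are distinct, the REC pins the \textsc{out}-order to the block order induced by the highlighted sources, namely $c,a,b$, which costs $3$ crossings. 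Note that the cheap order $a,c,b$ places $a$ above $c$, in violation of the highlighted-source order, so it is ruled out by the REC. Thus for this fixed \textsc{in}-order the REC is strictly worse by one crossing.

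The step I expect to be the real obstacle is closing off the optimizer's ``escape'' through the \textsc{in}-layer: since \textsc{AFCrossMinREC} may also permute \textsc{in}, I must rule out that some reordering of $u,v,w$ lets the forced red blocks coincide with the cheap interleaved order. This is exactly where I would invoke the rotational symmetry together with vertical reflection: these operations act transitively on all six orderings of the \textsc{in}-layer and preserve both the total crossing number and the set of highlighted edges, so every \textsc{in}-order yields unconstrained optimum $2$ and REC-optimum $3$. Hence \textsc{AFCrossMin} achieves $2$ whereas \textsc{AFCrossMinREC} needs $3$. A secondary point to state explicitly is that the highlighting is part of the instance (fixed once per DD 5): if the solver were free to re-choose which incoming edge is red, it could pick highlights admitting a non-crossing block order and the gap would vanish, so the claim is properly read as the existence of an AF \emph{with a highlighting} for which the REC is more expensive.
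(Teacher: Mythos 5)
There is a genuine gap, and it is the one you yourself flag at the end: you treat the highlighting as part of the input, but in the paper's problem it is part of the \emph{solution}. DD~5 is a design decision about the drawing (``for each argument in \textsc{out}, highlight one incoming edge''), \textsc{AFCrossMinREC} takes only $(F,E)$ as input, and the optimizer is free to choose which incoming edge is red --- this is explicit in the paper's ILP, where the $r_{ij}$ are 0--1 \emph{variables} (constraints (22)--(23)), and in the heuristic, which has red-edge selection strategies and even a local search that swaps red edges. Under that reading your instance does not witness the theorem: in the crossing-minimal drawing (\textsc{in}-order $u,v,w$, \textsc{out}-order $a,c,b$, $2$ crossings) the solver can select the red edges $(u,a)$, $(w,c)$, $(w,b)$, which are pairwise non-crossing, so \textsc{AFCrossMinREC} also achieves $2$ and the gap vanishes --- exactly the escape you concede. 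Re-reading the theorem as ``there exists an AF \emph{with a highlighting}'' is not a permissible repair; it changes the statement.

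The missing idea --- and the crux of the paper's proof --- is to use the \emph{directions} of the attacks to force the red-edge choice. The paper constructs degree-2 \textsc{out}-arguments whose second incident edge points back into \textsc{in} (e.g.\ $v$ attacks $x$ and $x$ attacks $u$), so each such argument has exactly one incoming edge from \textsc{in} and the solver has no freedom in choosing its red edge; admissibility is preserved because the counter-attack defends the attacked \textsc{in}-argument. Your construction is in fact very close to repairable in this way: orient the ``leading'' attacker of each pair outward, i.e.\ replace $(u,a),(v,b),(w,c)$ by $(a,u),(b,v),(c,w)$. Then $E=\{u,v,w\}$ is still stable (each \textsc{in}-argument is defended by the trailing attacker), the underlying 2-layer structure and hence all crossing counts are unchanged, and the red edges are now \emph{forced} to be $(v,a),(w,b),(u,c)$. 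With that modification your symmetry argument (unconstrained optimum $2$, REC-optimum $3$ for every \textsc{in}-order) goes through and proves the theorem; as written, however, the proof establishes only the weaker fixed-highlighting statement.
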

\begin{proof}
Consider an arbitrary AF $F=(A,R)$ with a stable extension~$E$ of size three. Let $u,v,2\in E\subseteq A$ be the arguments in the extension, and let $A\setminus E$ contain only degree-1 and degree-2 arguments. See Figure~\ref{fig:theorem2} (left) for an example the drawing constructed in the following paragraphs.

Without loss of generality that $u$, $v$ and $w$ are drawn in this order from top to bottom in the \textsc{in}-layer of the layered AF drawing of~$F$. Since $E$ is a stable extension, all arguments in $A\setminus E$ will be drawn in the \textsc{out}-layer. Drawing all degree-1 arguments adjacent to~$u$ at the top of the \textsc{out}-layer can prevent them from incurring any crossings, and symmetrically, all degree-1 neighbors of~$w$ can be drawn at the bottom of the \textsc{out}-layer. The degree-1 neighbors of~$v$ will be drawn between the other degree-1 arguments, so that the degree-1 arguments do not incur any pairwise crossings.

\begin{figure}[t]
    \centering
    % \hfill
    % \includegraphics[width=.19\textwidth]{not-afcrossmin}
    \hfill
    \includegraphics[width=.19\textwidth]{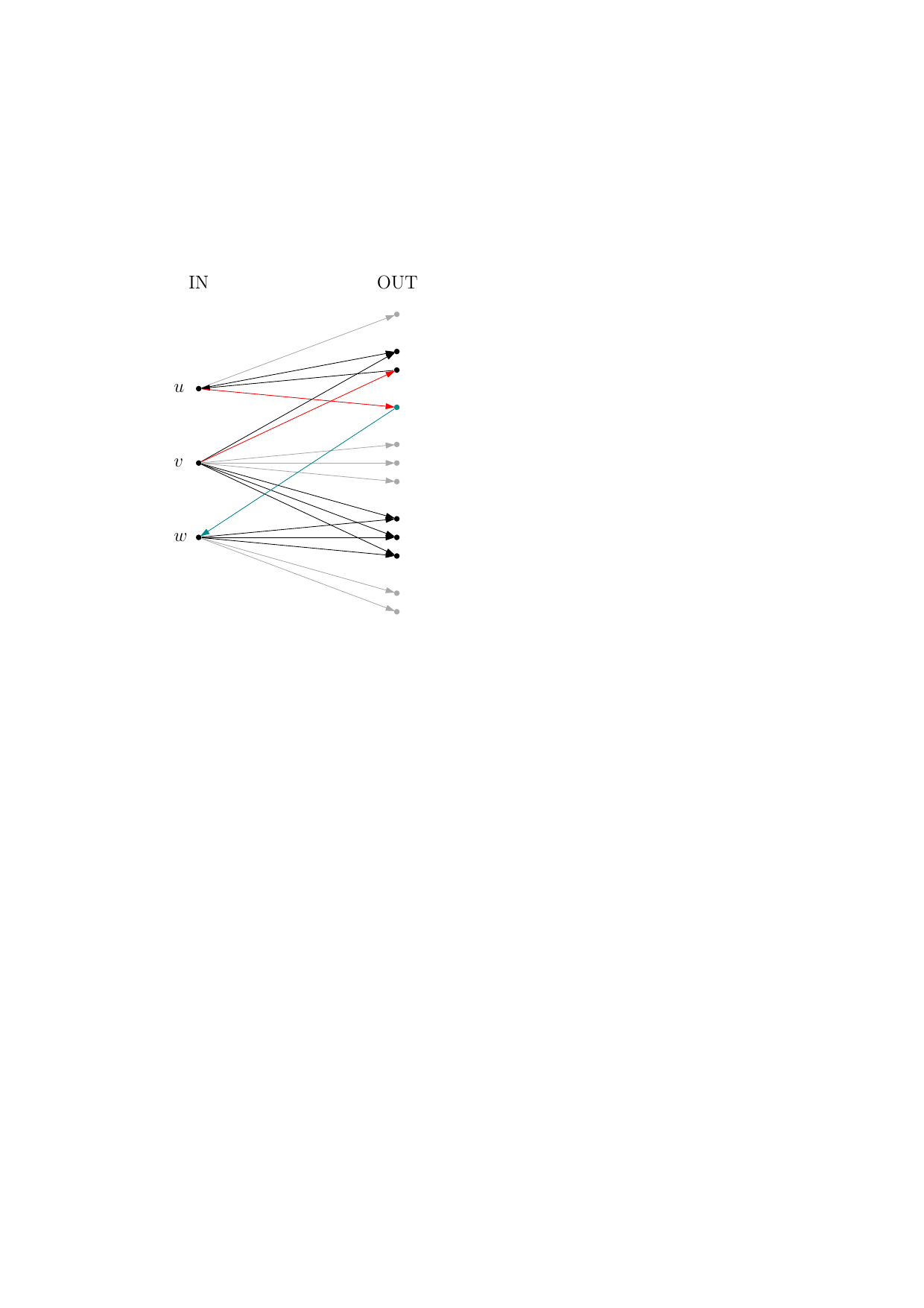}
    \qquad
    \includegraphics[width=.19\textwidth]{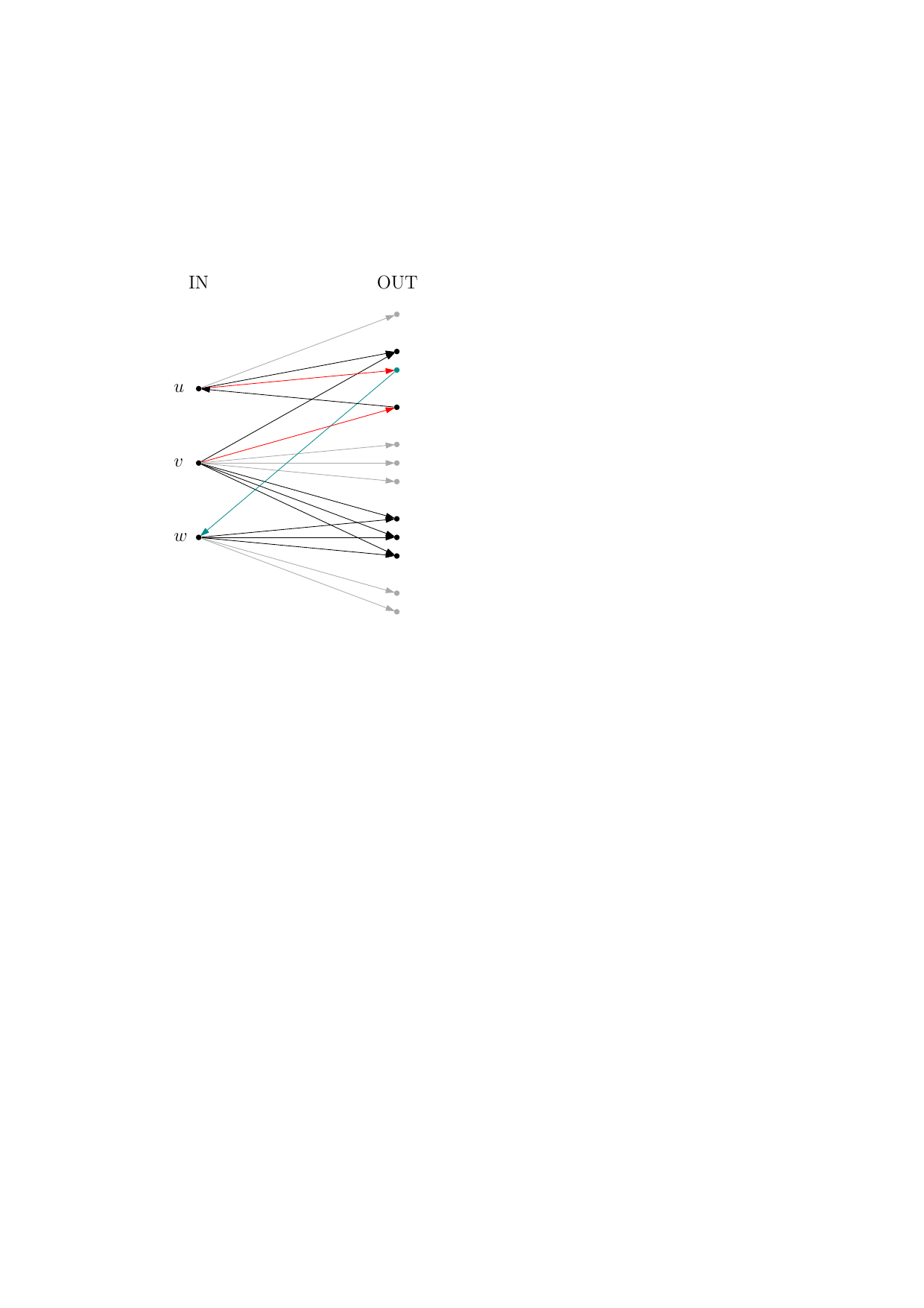}
    \caption{When $|E|=3$, solving \textsc{AFCrossMin} (left) may result in fewer crossings than solving \textsc{AFCrossMinREC} (right).}
    \label{fig:theorem2}
\end{figure}

Next we consider the degree-2 arguments adjacent to~$v$. By placing the degree-2 arguments adjacent to $u$ and $v$ between their respective degree-1 neighbors, we prevent any crossings with the edges incident with the degree-1 neighbors. We can do a symmetric construction for $v$ and $w$, Let there be $k_{u}$ degree-2 arguments adjacent to $u$ and $v$ and $k_{w}$ degree-2 arguments adjacent to $v$ and $w$. The construction so far then incurs $\frac{k_{uv}\cdot (k_{uv}-1)}{2} + \frac{k_{vw}\cdot (k_{vw}-1)}{2}$ crossings.

The only arguments left are the degree-2 arguments incident with $u$ and $w$. Assume there is only one such argument~$a\in A$. To prevent crossings with the degree-1 neighbors of $u$ and $w$, we place~$a$ between these degree-1 neighbors in the ordering of the \textsc{out}-layer. 

Now observe that all edges incident with $v$ are crossed by at least one of the two edges incident with~$a$: All edges incident with~$v$ to arguments above~$a$ cross $(u,a)$ (or $(a,u)$), and all edges incident with~$v$ to arguments below~$a$ cross $(w,a)$ (or $(a,w)$). Furthermore, if $a$ is placed above a neighbor of both $u$ and $v$, then $(w,a)$ (or $(a,w)$) also crosses the edge incident with $u$, and symmetrically when $a$ is placed below a neighbor of both $v$ and~$w$. Hence, ideally $a$ is placed between the other sets of degree-2 arguments, such that the edges incident with $a$ incurs exactly one crossing with each edge incident with~$v$. Let~$E_v$ be the edges incident with~$v$, then solving \textsc{AFCrossMin} can result in at most $\frac{k_{uv}\cdot (k_{uv}-1)}{2} + \frac{k_{vw}\cdot (k_{vw}-1)}{2} + |E_v|$ crossings.

To complete the proof, observe that the direction of the edges can be such that, for a degree-2 neighbor incident with~$v$ the only incoming edge originates at~$v$. Thus, this edge will be chosen as a red edge. Assume without loss of generality that this is true for a neighbor of $u$ and $v$. Now the only incoming edge of $a$ can also originate from~$u$. To satisfy the REC, $a$ must be placed above the previously considered degree-2 neighbor of~$u$ and $v$ (see Figure~\ref{fig:theorem2} (right)). As explained in the previous this results in at least one additional crossing, compared to a placement of~$a$ between the other sets of degree-2 neighbors. In this case, solving \textsc{AFCrossMinREC} results in at least $\frac{k_{uv}\cdot (k_{uv}-1)}{2} + \frac{k_{vw}\cdot (k_{vw}-1)}{2} + |E_v| + 1$ crossings.
\end{proof}

\section{Algorithms}
In this section we introduce two methods to solve \textsc{AFCrossMinREC}: An ILP-based exact approach, and a heuristic pipeline; both are justified for an NP-hard problem.

\mypar{Integer linear programming} We have adapted an ILP for crossing minimization in multi-layer graph drawings~\cite{DBLP:conf/gd/JungerLMO97}. We use Boolean variables for each pair of arguments in a layer, to model their relative order in the permutation. Using so-called transitivity constraints we ensure that these variables produce a valid total order. Furthermore, we partition the edges into four sets~$E_1,\ldots,E_4$ of edges between \textsc{in} and \textsc{out}, within \textsc{out}, between \textsc{out} and \textsc{undec}, and within \textsc{undec}, respectively. We consider crossings only between edges within one such set, and employ Boolean crossing variables~$c_{ijkl}^n$ for every pair of edges $(i,j),(k,l)$ within a set~$E_n$. These variables are set via constraints which differ between intra-layer and proper edge crossings. Our objective function multiplies the crossing variables with constants $w_{ijkl}^n$ as crossing weights. 
\begin{equation*}
    \text{minimize} \sum_{n=1}^4\sum_{(i,j),(k,l)\in E_n} w_{ijkl}^n \cdot c_{ijkl}^n
\end{equation*}
By default this weight is set to~$1$, but for $n=1$ we set the weight to the total number of possible crossings in $E_2,E_3,E_4$ to strictly prioritize crossing minimization in~$E_1$ according to condition~II in the previous section. 

Red edges are selected via boolean variables~$r_{ij}$ for each edge~$(i,j)$ from \textsc{in} to \textsc{out}. Only one $r_{ij}$ can be set for each~$j$ in~\textsc{out}, and dedicated constraints forbid crossings.%Note that some of the edge sets may be empty, for example $E_3$ and $E_4$ are empty for stable extensions ($\textsc{undec} = \emptyset$). This removes some variables and constraints. 

\smallskip
\noindent\emph{The full ILP is described in Appendix~\ref{app:ILP}.}
\smallskip

Modern ILP solvers (e.g. Gurobi) may take over 30 minutes to find an optimal solution using the described ILP (see Section~\ref{sec:evaluation}). Since we intend for our layered AF drawings to be used in interactive settings, we also introduce a faster (albeit often suboptimal) heuristic alternative.

\mypar{Heuristic pipeline} Our heuristic pipeline for solving \textsc{AFCrossMinREC} consists of three main components:
\begin{itemize}
    \item Selecting red edges.
    \item Iteratively applying (variants of) the \emph{barycenter method} (see below) to enforce the REC and minimize crossings. 
    \item Employing a local search to improve red-edge selection.
\end{itemize}

When selecting edges to be highlighted between the \textsc{in}- and \textsc{out}-layer, we employ two strategies (A and B). Strategy A greedily selects as many edges originating from the same argument in~\textsc{in} as possible, ensuring that only few arguments in~\textsc{in} are incident with red edges. %This also makes it easy to adhere to the REC, since the edges originating from the same argument in~\textsc{in} cannot cross, while arguments in~\textsc{out} with incoming red edges from different arguments in~\textsc{in} may not interleave in their ordering. 
Strategy B tries to disperse the highlighted edges over as many arguments in~\textsc{in} as possible, by iteratively selecting (and removing) maximum matchings of edges from~\textsc{in} to~\textsc{out}.

The (default) barycenter method reorders one layer, relative to a static adjacent layer, by placing vertices in the barycenter of its neighbors~\cite{stt-mvuhss-81}. We use this method to ensure the solution adheres to the REC and secondarily to minimize crossings. We use variants of the barycenter method in four ways: First, we reorder~\textsc{out}, such that arguments in~\textsc{out} with red edges originating from the same \textsc{in}-argument become consecutive. Second, we reorder each subset of~\textsc{out} with red edges from the same \textsc{in}-argument, minimizing crossings. Third, we reorder~\textsc{in}-arguments incident with red edges, to remove red-edge crossings. Fourth, we reorder all of~\textsc{in} to minimize overall crossings, while preventing new red-edge crossings.

The local search considers for each argument in~\textsc{out} all its incoming edges from~\textsc{in} and tries to replace the selected red edge, by one of the incoming edges that was not selected. Then the sequence of barycenter methods is run, and the result is compared to the drawing before the red-edge swap. If the number of crossings is reduced, the new drawing is maintained, otherwise we revert to the pre-swap drawing.

Our heuristic pipeline now works as follows. We first select a set of red edges (via strategy A or B). Then we apply the sequence of barycenter methods as described. Next, we perform the local search, %replacing red edges, optimizing using the sequence of barycenter methods, and keeping the result only if the number of crossings improved. 
and finally, we reorder \textsc{undec} using the barycenter method (if it is non-empty).% one more time, keeping \textsc{out} fixed, while reordering \textsc{undec}.

\section{Evaluation}
\label{sec:evaluation}

We have implemented our AF layout algorithms into an interactive prototype. %\footnote{See footnote~\ref{note:demo} on the title page.}. 
Using this implementation, we quantitatively evaluate our algorithms, showcasing to what extent the exact ILP approach is scalable, and how well the heuristic performs in terms of minimizing crossings. For that, we measure the run time of our algorithms, as well as the number of crossings they achieve on a large set of AFs. Before elaborating on these results, we explain our experimental setup and the data used in our experiments.

\mypar{Experimental setup} We have implemented our heuristics in Javascript V8 (12.4.254.15) and run in Chrome (120.0.6099.217). The ILP model was built in Python 3.8.3 using gurobipy 10.0.3. We ran our evaluation on a machine with an Intel(R) Core(TM) i7-9700K CPU (3.60 GHz, 8 physical cores and 8 logical processors), and 16GB RAM.

For our heuristic pipeline, we observe that red-edge-selection strategy A often slightly outperforms strategy B in crossing minimization. Still, the choice of strategy does not heavily influence the end result, as the local search may later adapt the red edges. Due to space constraints, we evaluate only the variant of our pipeline employing strategy~A.

\begin{figure*}
    \centering
    \includegraphics[width=\textwidth]{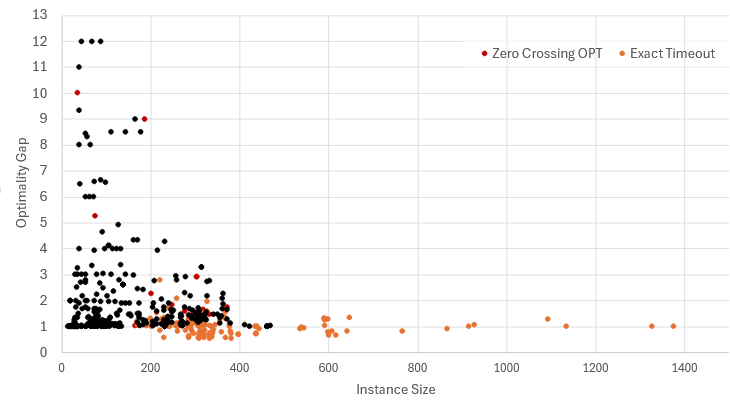}
    \caption{Optimality gap vs. instance size for 546 instances. Black and orange points show the optimality gap as a ratio, while red points show the absolute number of crossings for the heuristic. For orange points, the ILP hit the 30 minute time-out limit.}
    \label{fig:optimality-ratio}
\end{figure*}

\mypar{Data} We used standard AF benchmarks from the ICCMA competition~\cite{DBLP:conf/kr/JarvisaloLN23} with light adaptations: To create a data set with a variety of characteristics (more/fewer arguments in \textsc{in}/\textsc{out}/\textsc{undec}, higher/lower density of attacks between or within certain layers) we removed vertices/edges uniformly at random from larger instances until the desired thresholds for number of vertices and/or edges were achieved. This resulted in 460 2-layer AF drawings and 86 3-layer AF
drawings. The 2-layer instances vary from a minimum of 15 to a maximum of 940
total arguments and attacks. The 3-layer instances range from 27 to 1,374 total arguments and attacks.

\mypar{Measurements} Table~\ref{tab:runtimes} shows the run times measured in our experiments, and Fig.~\ref{fig:optimality-ratio} shows the relation between the instance size and the ratio between the number of crossings produced by our heuristic pipeline, and the number of crossings in solutions of our ILP. In Fig.~\ref{fig:optimality-ratio} orange points indicate that the 30 minute time-out limit was reached by the ILP. As stated in the paper: The ILP may not yet have found an optimal solution for these instances, and the heuristic will therefore sometimes outperform the ILP, leading to an optimality ratio lower than one. Red points indicate that the optimal solution found by the ILP has no crossings. Red points therefore show the absolute number of crossings in the heuristic solution, instead of the (undefined) ratio.

\begin{table}[b]
    \caption{Average run times (in ms) for our exact ILP approach and our heuristic pipeline on instances of various sizes. Instance size is measured by the sum of total arguments and attacks.}
    \label{tab:runtimes}
    \centering
    \begin{tabular}{r|rr}
    \toprule
    & \multicolumn{1}{c}{ILP} & \multicolumn{1}{c}{Heuristic} \\
    \midrule
    0--100 & 8137 & 1.0 \\
    101--300 & 728595 & 8.4\\
    $>$300 & 1222331 & 29.8\\
    \bottomrule
    \end{tabular}
\end{table}

\mypar{Results} %Our results can be found in Table~\ref{tab:runtimes} and Fig.~\ref{fig:optimality-ratio}. 
The run times measured in our evaluation show that the heuristic pipeline is very fast, even for very large instances (at least 300 combined arguments and attacks), where the heuristic on average takes less than 30ms. Conversely, the ILP approach is much slower, taking about 8s on average for small instances (at most 100 arguments and attacks), while large instances require more than 20 minutes on average to find a solution. In our experiments we even see the ILP regularly hit our 30 minute time-out limit. 

We also measured the ratio between the number of crossings in solutions of the heuristic and the number of crossings in the solutions of our ILP. In theory, this ratio should be larger than one. However, in practice, when the time-out limit is hit, the ILP may not yet have found an optimal solution, and the heuristic will therefore sometimes outperform the ILP. Overall, we see that for the large majority of the instances, the heuristic produces at most twice as many crossings as the optimum solution.% of the optimal solution.
We conclude that the heuristic pipeline is an attractive alternative to the exact methods, especially in interactive settings. % such as our prototype implementation.

\section{Conclusion}
%In this work, we proposed 
We proposed novel methods for the visualization of AFs. % and their extensions.
We employ techniques from graph drawing to optimize the visual clarity and support the user to explore and understand the structure of an extension. Future work will include user studies in order to understand to which extent our approach increases the understanding 
of argumentation semantics.

%\section*{Acknowledgments}
%
%The acknowledgements and references fall outside 4-page limit.

% \appendix
% \section{Appendix or supplementary material?}
% Appendices should come before the references, and it is currently unclear to me whether they count towards the page limit. \textbf{The way I read the cfp, they do count towards the 4 page limit} It seems from the order (after acknowledgements) that they do not count. There is also the option to submit supplementary material in a separate file. If we need appendices (for example for the LP formulation) we should see what makes most sense.

%% The file kr.bst is a bibliography style file for BibTeX 0.99c
\bibliographystyle{kr}
\bibliography{references}

\begin{thebibliography}{}

\bibitem[\protect\citeauthoryear{Atkinson \bgroup et al\mbox.\egroup
  }{2017}]{AtkinsonBGHPRST17}
Atkinson, K.; Baroni, P.; Giacomin, M.; Hunter, A.; Prakken, H.; Reed, C.;
  Simari, G.~R.; Thimm, M.; and Villata, S.
\newblock 2017.
\newblock Towards artificial argumentation.
\newblock {\em {AI} Mag.} 38(3):25--36.

\bibitem[\protect\citeauthoryear{Baroni \bgroup et al\mbox.\egroup
  }{2018}]{arguHandbook}
Baroni, P.; Gabbay, D.; Giacomin, M.; and van~der Torre, L., eds.
\newblock 2018.
\newblock {\em Handbook of Formal Argumentation}.
\newblock College Publications.

\bibitem[\protect\citeauthoryear{Baroni, Caminada, and
  Giacomin}{2018}]{BaroniCG18}
Baroni, P.; Caminada, M.; and Giacomin, M.
\newblock 2018.
\newblock Abstract argumentation frameworks and their semantics.
\newblock In Baroni, P.; Gabbay, D.; Giacomin, M.; and van~der Torre, L., eds.,
  {\em Handbook of Formal Argumentation}. College Publications.

\bibitem[\protect\citeauthoryear{Besnard \bgroup et al\mbox.\egroup
  }{2022}]{BesnardDDL22}
Besnard, P.; Doutre, S.; Duchatelle, T.; and Lagasquie{-}Schiex, M.
\newblock 2022.
\newblock Explaining semantics and extension membership in abstract
  argumentation.
\newblock {\em Intell. Syst. Appl.} 16:200118.

\bibitem[\protect\citeauthoryear{Bistarelli and
  Santini}{2011}]{bistarelli2011conarg}
Bistarelli, S., and Santini, F.
\newblock 2011.
\newblock Conarg: A constraint-based computational framework for argumentation
  systems.
\newblock In {\em 2011 IEEE 23rd International Conference on Tools with
  Artificial Intelligence},  605--612.
\newblock IEEE.

\bibitem[\protect\citeauthoryear{Dachselt \bgroup et al\mbox.\egroup
  }{2022}]{DGKMRY2022}
Dachselt, R.; Gaggl, S.~A.; Kr{\"{o}}tzsch, M.; M{\'{e}}ndez, J.; Rusovac, D.;
  and Yang, M.
\newblock 2022.
\newblock Nexas: A visual tool for navigating and exploring argumentation
  solution spaces.
\newblock In Toni, F., ed., {\em Proceedings of the 9th International
  Conference on Computational Models of Argument (COMMA 2022)}, volume 353 of
  {\em FAIA},  116--127.
\newblock IOS Press.

\bibitem[\protect\citeauthoryear{Dung}{1995}]{Dung95}
Dung, P.~M.
\newblock 1995.
\newblock On the acceptability of arguments and its fundamental role in
  nonmonotonic reasoning, logic programming and n-person games.
\newblock {\em Artif. Intell.} 77(2):321--357.

\bibitem[\protect\citeauthoryear{Dvořák \bgroup et al\mbox.\egroup
  }{2014}]{DVORAK201453}
Dvořák, W.; Järvisalo, M.; Wallner, J.~P.; and Woltran, S.
\newblock 2014.
\newblock Complexity-sensitive decision procedures for abstract argumentation.
\newblock {\em Artificial Intelligence} 206:53--78.

\bibitem[\protect\citeauthoryear{Dvor{\'{a}}k \bgroup et al\mbox.\egroup
  }{2020}]{DvorakGRWW20}
Dvor{\'{a}}k, W.; Gaggl, S.~A.; Rapberger, A.; Wallner, J.~P.; and Woltran, S.
\newblock 2020.
\newblock The {ASPARTIX} system suite.
\newblock In Prakken, H.; Bistarelli, S.; Santini, F.; and Taticchi, C., eds.,
  {\em Computational Models of Argument - Proceedings of {COMMA} 2020, Perugia,
  Italy, September 4-11, 2020}, volume 326 of {\em Frontiers in Artificial
  Intelligence and Applications},  461--462.
\newblock {IOS} Press.

\bibitem[\protect\citeauthoryear{Eades and Whitesides}{1994}]{EadesW94}
Eades, P., and Whitesides, S.
\newblock 1994.
\newblock Drawing graphs in two layers.
\newblock {\em Theor. Comput. Sci.} 131(2):361--374.

\bibitem[\protect\citeauthoryear{Eades and Wormald}{1994}]{ew-ecdbg-94}
Eades, P., and Wormald, N.~C.
\newblock 1994.
\newblock Edge crossings in drawings of bipartite graphs.
\newblock {\em Algorithmica} 11:379--403.

\bibitem[\protect\citeauthoryear{Forster}{2004}]{Forster04}
Forster, M.
\newblock 2004.
\newblock A fast and simple heuristic for constrained two-level crossing
  reduction.
\newblock In Pach, J., ed., {\em Graph Drawing (GD'04)}, volume 3383 of {\em
  Lecture Notes in Computer Science},  206--216.
\newblock Springer.

\bibitem[\protect\citeauthoryear{Healy and Nikolov}{2014}]{hn-hda-13}
Healy, P., and Nikolov, N.~S.
\newblock 2014.
\newblock Hierarchical drawing algorithms.
\newblock In Tamassia, R., ed., {\em Handbook of Graph Drawing and
  Visualization}. CRC Press.
\newblock chapter~13,  409--454.

\bibitem[\protect\citeauthoryear{Janier, Lawrence, and
  Reed}{2014}]{reed2014ova}
Janier, M.; Lawrence, J.; and Reed, C.
\newblock 2014.
\newblock {OVA+:} an argument analysis interface.
\newblock In Parsons, S.; Oren, N.; Reed, C.; and Cerutti, F., eds., {\em
  Computational Models of Argument - Proceedings of {COMMA} 2014, Atholl Palace
  Hotel, Scottish Highlands, UK, September 9-12, 2014}, volume 266 of {\em
  Frontiers in Artificial Intelligence and Applications},  463--464.
\newblock {IOS} Press.

\bibitem[\protect\citeauthoryear{J{\"{a}}rvisalo, Lehtonen, and
  Niskanen}{2023}]{DBLP:conf/kr/JarvisaloLN23}
J{\"{a}}rvisalo, M.; Lehtonen, T.; and Niskanen, A.
\newblock 2023.
\newblock Design of {ICCMA} 2023, 5th international competition on
  computational models of argumentation: {A} preliminary report (invited
  paper).
\newblock In {\em Proc.~First International Workshop on Argumentation and
  Applications}, volume 3472 of {\em {CEUR} Workshop Proceedings},  4--10.
\newblock CEUR-WS.org.

\bibitem[\protect\citeauthoryear{Jünger and Mutzel}{1997}]{jm-2scmpeha-97}
Jünger, M., and Mutzel, P.
\newblock 1997.
\newblock 2-layer straightline crossing minimization: Performance of exact and
  heuristic algorithms.
\newblock {\em J. Graph Algorithms Appl.} 1(1):1--25.

\bibitem[\protect\citeauthoryear{J{\"{u}}nger \bgroup et al\mbox.\egroup
  }{1997}]{DBLP:conf/gd/JungerLMO97}
J{\"{u}}nger, M.; Lee, E.~K.; Mutzel, P.; and Odenthal, T.
\newblock 1997.
\newblock A polyhedral approach to the multi-layer crossing minimization
  problem.
\newblock In Battista, G.~D., ed., {\em Graph Drawing ({GD'97})}, volume 1353
  of {\em LNCS},  13--24.
\newblock Springer.

\bibitem[\protect\citeauthoryear{Lawrence \bgroup et al\mbox.\egroup
  }{2012}]{lawrence2012aifdb}
Lawrence, J.; Bex, F.; Reed, C.; and Snaith, M.
\newblock 2012.
\newblock {AIFdb}: Infrastructure for the argument web.
\newblock In Verheij, B.; Szeider, S.; and Woltran, S., eds., {\em
  Computational Models of Argument - Proceedings of {COMMA} 2012, Vienna,
  Austria, September 10-12, 2012}, volume 245 of {\em Frontiers in Artificial
  Intelligence and Applications}. {IOS} Press.
\newblock  515--516.

\bibitem[\protect\citeauthoryear{Niskanen and
  J{\"{a}}rvisalo}{2020}]{NiskanenJ20a}
Niskanen, A., and J{\"{a}}rvisalo, M.
\newblock 2020.
\newblock {\(\mathrm{\mu}\)}-toksia: An efficient abstract argumentation
  reasoner.
\newblock In Calvanese, D.; Erdem, E.; and Thielscher, M., eds., {\em
  Proceedings of the 17th International Conference on Principles of Knowledge
  Representation and Reasoning, {KR} 2020, Rhodes, Greece, September 12-18,
  2020},  800--804.

\bibitem[\protect\citeauthoryear{Purchase}{1997}]{p-wagehu-97}
Purchase, H.
\newblock 1997.
\newblock Which aesthetic has the greatest effect on human understanding?
\newblock In {\em Graph Drawing (GD'97)}, volume 1353 of {\em LNCS},  248--261.
\newblock Springer.

\bibitem[\protect\citeauthoryear{Schneider, Voigt, and
  Betz}{2007}]{Argunetarticle}
Schneider, D.~C.; Voigt, C.; and Betz, G.
\newblock 2007.
\newblock Argunet – a software tool for collaborative argumentation analysis
  and research.
\newblock {\em CMNA VII - Comput Models Nat Arg} 10.

\bibitem[\protect\citeauthoryear{Sugiyama, Tagawa, and
  Toda}{1981}]{stt-mvuhss-81}
Sugiyama, K.; Tagawa, S.; and Toda, M.
\newblock 1981.
\newblock Methods for visual understanding of hierarchical system structures.
\newblock {\em {IEEE} Trans. Syst. Man Cybern.} 11(2):109--125.

\end{thebibliography}

\newpage
\appendix
\section{ILP description}\label{app:ILP}
The full description of our ILP can be found on the last page. Line (1) shows the optimization function, and lines (2)-(21) ensure that the crossing variables~$c^n_{ijkl}$, for $n\in\{1,2,3,4\}$ are set correctly. We split the edges in the drawing into four sets: edges between \textsc{in} and \textsc{out} ($E_1$), edges between arguments in \textsc{out} ($E_2$), edges between \textsc{out} and \textsc{undec} ($E_1$), edges between arguments in \textsc{undec} ($E_2$). Lines (2)-(3) and lines (4)-(5) deal with crossings of edges in $E_1$ and $E_3$, respectively, while lines (6)-(13) and (14)-(21) deal with crossings of edges in $E_2$ and $E_4$, respectively.

We also distinguish between the order variables per layer: for \textsc{in} we have variables~$x_{ij}$ which model that argument~$i$ is placed before argument~$j$ in the top-to-bottom order in \textsc{in}. Similarly, we use crossing variables~$y_{ij}$ and~$z_{ij}$ for \textsc{out} and \textsc{undec}, respectively.

The indices we use for arguments are all unique and consecutive: Arguments in \textsc{in} are indexed with $0,\ldots, |\textsc{in}|$, arguments in \textsc{out} have indices $|\textsc{in}|+1,\ldots, |\textsc{in} + \textsc{out}|$, and \textsc{undec} uses $|\textsc{in} + \textsc{out}|+1,\ldots, |\textsc{in} + \textsc{out} + \textsc{undec}|$. 

Line (22) ensures that for each argument $j$ in \textsc{out} exactly one incoming edge is selected as a red edge. For example the edge originating from $i$ can be chosen for argument~$j$, by setting $r_ij = 1$. Line (23) ensures that no two red edges cross, and hence that the solution adheres to the REC.

Lines (24)-(26) are so-called transitivity constraints, which ensure that the order variables model a total and acyclic ordering of the arguments in the respective layers. Lines (27)-(29) make sure the order variables do not contradict each other. Finally, Line (30) restricts variables to 0-1.

\begin{figure*}
\begin{align}
 &\text{minimize} \; \sum\limits_{(i,j),(k,l) \in  E_n} w_{ijkl}^n  \cdot c_{ijkl}^n & \\
 &  \text{s.t.} \; x_{ik} + y_{lj} - c_{ijkl}^1 \leq  1 & (i,j),(k,l) \in E_1 & \\
 &  x_{ki} + y_{jl} - c_{ijkl}^1 \leq  1& (i,j),(k,l) \in E_1 & \\
 & y_{ik} + z_{lj} - c_{ijkl}^3 \leq  1 & (i,j),(k,l) \in E_3 & \\
 &  y_{ki} + z_{jl} - c_{ijkl}^3 \leq  1& (i,j),(k,l) \in E_3 & \\
 &  y_{ik} + y_{kj} + y_{jl} - c_{ijkl}^2 \leq 2 & (i,j),(k,l) \in E_2\\
 &  y_{il} + y_{lj} + y_{jk} - c_{ijkl}^2 \leq 2 & (i,j),(k,l) \in E_2\\
 &  y_{jk} + y_{ki} + y_{il} - c_{ijkl}^2 \leq 2 & (i,j),(k,l) \in E_2\\
 &  y_{jl} + y_{li} + y_{ik} - c_{ijkl}^2 \leq 2 & (i,j),(k,l) \in E_2\\
 &  y_{ki} + y_{il} + y_{lj} - c_{ijkl}^2 \leq 2 & (i,j),(k,l) \in E_2\\
 &  y_{kj} + y_{jl} + y_{li} - c_{ijkl}^2 \leq 2 & (i,j),(k,l) \in E_2\\
 &  y_{lj} + y_{jk} + y_{ki} - c_{ijkl}^2 \leq 2 & (i,j),(k,l) \in E_2\\
 &  y_{li} + y_{ik} + y_{kj} - c_{ijkl}^2 \leq 2 & (i,j),(k,l) \in E_2\\
 &  z_{ik} + z_{kj} + z_{jl} - c_{ijkl}^4 \leq 2 & (i,j),(k,l) \in E_4\\
 &  z_{il} + z_{lj} + z_{jk} - c_{ijkl}^4 \leq 2 & (i,j),(k,l) \in E_4\\
 &  z_{jk} + z_{ki} + z_{il} - c_{ijkl}^4 \leq 2 & (i,j),(k,l) \in E_4\\
 &  z_{jl} + z_{li} + z_{ik} - c_{ijkl}^4 \leq 2 & (i,j),(k,l) \in E_4\\
 &  z_{ki} + z_{il} + z_{lj} - c_{ijkl}^4 \leq 2 & (i,j),(k,l) \in E_4\\
 &  z_{kj} + z_{jl} + z_{li} - c_{ijkl}^4 \leq 2 & (i,j),(k,l) \in E_4\\
 &  z_{lj} + z_{jk} + z_{ki} - c_{ijkl}^4 \leq 2 & (i,j),(k,l) \in E_4\\
 &  z_{li} + z_{ik} + z_{kj} - c_{ijkl}^4 \leq 2 & (i,j),(k,l) \in E_4\\
 &  \sum\limits_{(i,j) \in Att_j} r_{ij} = 1  &  |\textsc{in}| < j \leq |\textsc{in}| + |\textsc{out}|   \\
 &  r_{ij} + r_{kl} + c_{ijkl}^1 \leq 2  &  (i,j),(k,l) \in E_{\textsc{in} \rightarrow \textsc{out}} \\
 &  0 \leq x_{ij} + x_{jk} - x_{ik} \leq 1 & 1 \leq i < j < k \leq |\textsc{in}| & \\
 &  0 \leq y_{ij} + y_{jk} - y_{ik} \leq 1 & |\textsc{in}| < i < j < k \leq |\textsc{in}| + |\textsc{out}| & \\
  &  0 \leq z_{ij} + z_{jk} - z_{ik} \leq 1 & \begin{aligned}
        & |\textsc{in}| + |\textsc{out}| < i < j < k \leq \\
        & |\textsc{in}| + |\textsc{out}| + |\textsc{undec}|
    \end{aligned} & \\
 &  x_{ij} + x_{ji} = 1 & 1 \leq i < j \leq |\textsc{in}| & \\
 &  y_{ij} + y_{ji} = 1 & |\textsc{in}| < i < j \leq |\textsc{in}| + |\textsc{out}| & \\
 & z_{ij} + z_{ji} = 1 & \begin{aligned}
        & |\textsc{in}| + |\textsc{out}| < i < j \leq \\
        & |\textsc{in}| + |\textsc{out}| + |\textsc{undec}|
    \end{aligned} & \\
 & x_{ij},y_{ij},z_{ij},c_{ijkl},r_{ij}  \in \{0,1\}  &
\end{align}
\end{figure*}

\end{document}